\newcommand{\beh}[1]{\llbracket #1\rrbracket}
\newtheorem{theorem}{Theorem}[section]
\newtheorem{example}[theorem]{Example}
\def\endproof{\qed\endtrivlist}
\let\csname endproof*\endcsname=\endproof
\def\qedsymbol{\ifmmode\bgroup\else$\bgroup\aftergroup$\fi
  \vcenter{\hrule\hbox{\vrule height.6em\kern.6em\vrule}\hrule}\egroup}
\def\qed{\ifmmode\else\unskip\nobreak\fi\quad\qedsymbol}
\DeclareFontFamily{OT1}{pzc}{}
\DeclareFontShape{OT1}{pzc}{m}{it}{<-> s * [1.200] pzcmi7t}{}
\DeclareMathAlphabet{\mathpzc}{OT1}{pzc}{m}{it}
\title{Approximate State Reduction of Fuzzy Finite Automata\footnote{This research was supported by the Science Fund of the Republic of Serbia, Grant no 7750185, Quantitative Automata Models: Fundamental Problems and Applications - QUAM}}
\author{Miroslav \'Ciri\'c\qquad Ivana Mici\'c\qquad Stefan Stanimirovi\'c
\institute{University of Ni\v s, Faculty of Sciences and Mathematics, Vi\v segradska 33, Ni\v s, Serbia}
\email{miroslav.ciric@pmf.edu.rs, ivana.micic@pmf.edu.rs, stefan.stanimirovic@pmf.edu.rs}
\and
Linh Anh Nguyen
\institute{Institute of Informatics, University of Warsaw, Banacha 2, 02-097 Warsaw, Poland, and \\
Faculty of Information Technology, Nguyen Tat Thanh University, Ho Chi Minh City, Viet Nam}
\email{nguyen@mimuw.edu.pl}
}
\begin{document}

\maketitle

\begin{abstract}
In this paper we introduce a new type of approximate state reductions where the behaviors of the reduced and the original automaton do not have to be identical, but they must match on all words of length less than or equal to some given natural number.~We provide four methods for performing such reductions.
\end{abstract}

\section{Introduction}

Minimization and state reduction are related problems that belong to the fundamental problems of automata theory and have many significant applications.~Minimization is the problem of finding an automaton with a minimal number of states equivalent to a given automaton.~However,
this problem cannot be solved efficiently (in polynomial time) for fuzzy finite or nondeterministic finite automata as their particular type (cf.~\cite{LQ.15} for more details).~Therefore, the so-called state reduction problem for fuzzy finite automata is studied instead, where the goal is to find an automaton equivalent to the given automaton that is not necessarily minimal, but that can be treated sufficiently small or close enough to the minimal one when comparing the number of states. In turn, the state reduction algorithm can be performed efficiently.~\'{C}iri\'{c} et al. discussed this problem in~\cite{CSIP.07}, and then in~\cite{CSIP.10},~\cite{SCI.14} and~\cite{SCB.18}, where they proposed state reduction methods that construct sequences of fuzzy matrices. The drawback of these methods is that these sequences can be infinite when the underlying structure of truth values is not locally finite. However, even when the sequences of matrices are finite, the number of different elements in sequences can be high.~For the reasons above, different authors have proposed an approximate approach not only in the context of state reduction but also in some other close contexts, such as containment and equivalence of fuzzy automata, as well as simulations and bisimulations between fuzzy automata (see \cite{BK.09,Li.08,MJS.22,MNS.22,Nguyen.23a,NMS.23a,NMS.23b,QZP.23,SM.22,SMC.22,YL.20} and articles cited there).

In the approximate state reduction problem, the main goal is to construct an automaton with a smaller number of states than a given automaton with behaviour that does not have to be identical to the behavior of a given automaton, but ``close enough'' to it.~Dominantly, scholars have defined the close\-ness between the behaviors of two automata by the concept of the degree~of equality of fuzzy sets~\cite{BK.09,YL.20,SMC.22,Nguyen.23a}. However, when we take membership values from the real unit interval, the conventional metric on that interval can also be used to define closeness \cite{Li.08,YL.18}.~Furthermore, a different fuzzy similarity measure has been proposed recently in \cite{QZP.23,QZF.22} via relational lifting.~In this paper, we approach approximate state reductions differently.~Namely, we require that the behaviors of a given automaton and its reduced one may not be strictly equivalent, but equivalent for all words with length not exceeding a given natural number $k$.~This relaxation from the strict equivalence comes naturally, as when working with fuzzy automata in practical situations, one encounters words with finite (bounded) length. It is important to emphasize that Nguyen et al. recently employed a similar idea in~\cite{NMS.23b} to generalize fuzzy simulations and fuzzy bisimulations for fuzzy automata~\cite{Nguyen.23a}.~In this paper, $k$-equivalence means the equivalence of a given fuzzy automaton and its reduced one for all words not exceeding length $k$.~Similarly, $k$-reduction means a state reduction resulting in an automaton $k$-equivalent to the given automaton.

This paper provides four $k$-reduction methods based on state reduction methods developed in \cite{SCI.14} that output a fuzzy automaton strictly equivalent to the given fuzzy automaton. Precisely, the first two methods consist of constructing a descending sequence of fuzzy quasi-order matrices. Theorems \ref{th:ri.k-e} and \ref{th:li.k-e} prove that~the fuzzy automaton formed by the different row vectors of the $k$th member of that sequence of matrices (the numbering starts from $0$) is $k$-equivalent to the given fuzzy automaton. Moreover, if the number of different elements in this sequence is not greater than $k$, then the resulting fuzzy automaton is also strictly equivalent to the given fuzzy automaton. In locally finite structures, such as the G\"odel or {\L}ukasiewicz structure, the sequence necessarily has a finite number of different elements. Therefore, we can always pick a sufficiently high $k \in \mathbb{N}$ in these structures, so the resulting fuzzy automaton is also strictly equivalent to the given fuzzy automaton. On the other hand, for some non-locally structures satisfying some additional conditions~\cite{CSIP.07,SCI.14}, such as the product structure, a reduced fuzzy automaton strictly equivalent to the original fuzzy automaton can be constructed from different row vectors of the infimum of this sequence. Therefore, by choosing the $k$th member of the sequence, the resulting $k$-equivalent reduced fuzzy automaton can be regarded as an approximation of the reduced fuzzy automaton strictly equivalent to the original fuzzy automaton.

The other two methods for performing state reduction introduced in \cite{SCI.14} consist of constructing a family of fuzzy quasi-order matrices such that a reduced fuzzy automaton built from different row vectors of the infimum of this family is strictly equivalent to the original fuzzy automaton.~These methods generally give better reductions than the first two, but their time complexity is generally higher.~Here we transform that family into a sequence of fuzzy quasi-order matrices and prove that the fuzzy automaton formed by different row vectors of the $k$th member of that sequence (the numbering again starts from $0$) is $k$-equivalent to the original automaton (Theorems \ref{th:wri.k-e} and \ref{th:wli.k-e}).~We also point out the advantages and disadvantages of the four proposed methods.

\section{Preliminaries}

Throughout this paper, $\mathbb{N}$ will denote the set of all natural numbers (including the zero).

A \textit{resuduated lattice} is defined as an algebra $\mathbb{L}=(\mathbb{L},\lor,\land,\otimes,\to,0,1)$, with four binary operations~and two constants $0$ and $1$, which satisfies the following conditions:
\begin{itemize}\parskip=-2pt
    \item[(R1)] $(\mathbb{L},\lor,\land,0,1)$ is a bounded lattice with the least element $0$ and the greatest element $1$;
    \item[(R2)] $(\mathbb{L},\otimes ,1)$ is a commutative semigroup with the identity $1$;
    \item[(R3)] the pair $(\otimes ,\to)$ satisfies the \textit{adjunction} or \textit{residuation property}: for all $a,b,c\in \mathbb{L}$, 
    \[
    a\otimes b\leqslant c \ \ \Leftrightarrow\ \ a\leqslant b\to c .
    \]
\end{itemize}
Here $\leqslant $ stands for the ordering in the lattice from (R1). The operation $\to $ is called the~\textit{residuum}, and $\otimes $ is called the \textit{multiplication}. If the bounded lattice from (R1) is complete, then $\mathbb{L}$ is called a \textit{complete residuated lattice}. As it is customary in the theory of algebraic structures to use the same notation for an algebra and its carrier set, here we denote the residuated lattice and its carrier set with the same symbol $\mathbb{L}$ as well.

The main examples of complete residuated lattices are those whose carrier set is the real unit interval $\mathbb{I}=[0,1]$ and the multiplication is some triangular norm on $\mathbb{I}$, such as, for example, the Gödel structure, product structure and {\L}ukasiewicz structure. For more information about complete residuated lattices and the mentioned structures on $\mathbb{I}$ we refer to the books \cite{Bel.02,BV.05} and other papers listed below, in the list of references.

Let $\mathbb{L}$ be an arbitrary complete residuated lattice.~For  arbitrary $m,n\in \mathbb{N}\setminus\{0\}$, by $\mathbb{L}^{\!\!m\times n}$ we denote the set of all $m\times n$ matrices with entries in $\mathbb{L}$, and by $\mathbb{L}^{\!\!n}$ the set of all vectors of size $n$ with entries in $\mathbb{L}$ (by the size of a vector we mean the number of its coordinates).~A \textit{fuzzy subset} of a non-empty set $A$ is defined as any function $\alpha:A\to \mathbb{L}$, and a \textit{fuzzy relation} on $A$ is defined as any fuzzy subset of $A\times A$, that is, as any function $M:A\times A\to \mathbb{L}$.~For $a\in A$, the value $\alpha(a)$ is called the \textit{membership degree} of $a$ in the fuzzy set $\alpha $.~Here we deal mostly with fuzzy subsets of a finite set, as well as with fuzzy relations on a finite set, and then, when dealing with a finite set $A=\{a_1,a_2,\ldots ,a_n\}$, a fuzzy subset $\alpha $ of $A$ is identified with a vector from $\mathbb{L}^{\!\!n}$ whose $i$th coordinate is $\alpha (a_i)$, while a fuzzy relation $M$ on $A$ is identified with a matrix from $\mathbb{L}^{\!\!n\times n}$ whose $(i,j)$-entry is $M(a_i,a_j)$.~Without risk of confusion, the vector corresponding to the fuzzy subset $\alpha $ is denoted by the same symbol $\alpha$, and its $i$th coordinate is denoted by $\alpha(i)$, while the matrix corresponding to the fuzzy relation $M$ is denoted by the same symbol $M$, and its $(i,j)$-entry is denoted by $M(i,j)$.

For a matrix $M\in \mathbb{L}^{\!\!n\times n}$ and a fixed $i\in [1..n]$, where $[1..n]=\{1,2,\ldots ,n\}$, the vector whose $j$th coordinate is $M(i,j)$, for~any $j\in [1..n]$, is called the $i$th \textit{row vector} of $M$, and for a fixed $j\in [1..n]$, the vector whose $i$th coordinate is $M(i,j)$, for any $i\in [1..n]$, is called the $j$th \textit{column vector} of $M$.

The product $M\cdot N$ of two matrices $M,N\in \mathbb{L}^{\!\!n\times n}$ (fuzzy relations on $A$) is a matrix from $\mathbb{L}^{\!\!n\times n}$ (a fuzzy relation on $A$) defined by
\[
(M\cdot N) (i,j)=\bigvee_{s=1}^{\!\!n} M(i,s)\otimes N(s,j),
\]
for all $i,j\in [1..n]$, the products $\alpha\cdot M$ and $M\cdot \beta $ of vectors $\alpha,\beta\in \mathbb{L}^{\!\!n}$ (fuzzy subsets of $A$) and a matrix $M\in \mathbb{L}^{\!\!n\times n}$ (fuzzy relation on $A$) are vectors from $\mathbb{L}^{\!\!n}$ (fuzzy subsets of $A$) defined by
\[
(\alpha\cdot M) (i)=\bigvee_{s=1}^{\!\!n} \alpha(s)\otimes M(s,i), \qquad 
(M\cdot \beta) (i)=\bigvee_{s=1}^{\!\!n} M(i,s)\otimes \beta(s),
\]
for every $i\in [1..n]$, and the product $\alpha\cdot \beta $ of two vectors from $\mathbb{L}^{\!\!n}$ (fuzzy subsets of $A$) is the element from $\mathbb{L}$ defined by
\[
\alpha\cdot\beta =\bigvee_{s=1}^{\!\!n} \alpha(s)\otimes \beta (s).
\]
The last product $\alpha\cdot \beta $ is called the \textit{scalar product} or \textit{dot product} of vectors (fuzzy subsets) $\alpha $ and $\beta$. 

The ordering $\leqslant $ on $\mathbb{L}^{\!\!n\times n}$ is defined entrywise by
\[
M\leqslant N\ \ \Leftrightarrow\ \ M(i,j)\leqslant N(i,j), \ \text{for all $i,j\in [1..n]$},
\]
for all $M,N\in \mathbb{L}^{\!\!n\times n}$, and similarly, the ordering $\leqslant $ on $\mathbb{L}^{\!\!n}$ is defined coordinatewise by
\[
\alpha\leqslant \beta\ \ \Leftrightarrow\ \ \alpha(i)\leqslant \beta(i), \ \text{for each $i\in [1..n]$},
\]
for all $\alpha,\beta\in \mathbb{L}^{\!\!n}$. It is easy to verify that these
orderings on $\mathbb{L}^{\!\!n\times n}$ and $\mathbb{L}^{\!\!n}$ are compatible with matrix products and vector-matrix products, that is,
\begin{align*}
    \alpha\leqslant \beta \ \ &\Rightarrow\ \ \alpha\cdot M\leqslant \beta\cdot M,\ \ M\cdot \alpha\leqslant M\cdot \beta \\
    M\leqslant N\ \ &\Rightarrow\ \ K\cdot M\leqslant K\cdot N, \ \ M\cdot K\leqslant N\cdot K,\ \ \alpha\cdot M\leqslant \alpha\cdot N,\ \ M\cdot \alpha\leqslant N\cdot \alpha
\end{align*}
for all $\alpha,\beta \in \mathbb{L}^{\!\!n}$ and $K,M,N\in \mathbb{L}^{\!\!n\times n}$. The supremum and infimum of a family $\{M_s\}_{s\in I}$ of matrices from $\mathbb{L}^{\!\!n\times n}$ are respectively matrices from $\mathbb{L}^{\!\!n\times n}$ defined by
\[
\bigl(\bigvee_{s\in I}M_s\bigr)(i,j)=\bigvee_{s\in I} M_s(i,j), \qquad 
\bigl(\bigwedge_{s\in I}M_s\bigr)(i,j)=\bigwedge_{s\in I} M_s(i,j),
\]
for all $i,j\in [1..n]$.

The matrix $I_n\in \mathbb{L}^{\!\!n\times n}$ defined by $I_n(i,j)=1$, for $i=j$, and $I_n(i,j)=0$, for $i\ne j$, $i,j\in [1..n]$, is called the \textit{identity matrix} of order $n$. A matrix $M\in \mathbb{L}^{\!\!n\times n}$ is \textit{reflexive} if $I_n\leqslant M$, it is \textit{transitive} if $M^2\leqslant M$, where $M^2=M\cdot M$, and it is \textit{symmetric} if $M(i,j)=M(j,i)$, for all $i,j\in [1..n]$. A reflexive~and~transitive matrix is called a \textit{fuzzy quasi-order matrix}, and a symmetric fuzzy quasi-order matrix is called a \textit{fuzzy equivalence matrix}. 

For matrices $M,N\in \mathbb{L}^{\!\!n\times n}$, the \textit{right residual} of $N$ by $M$ is a matrix $M\backslash N\in \mathbb{L}^{\!\!n\times n}$ defined by
\begin{equation}\label{eq:r.res}
    (M\backslash N)(j,k)=\bigwedge_{i=1}^n M(i,j)\to N(i,k),
\end{equation}
for all $j,k\in [1..n]$, and the \textit{left residual} of $N$ by $M$ is a matrix $N/M\in \mathbb{L}^{\!\!n\times n}$ defined by
\begin{equation}\label{eq:l.res}
    (N/M)(i,j)=\bigwedge_{k=1}^n M(j,k)\to N(i,k),
\end{equation}
for all $i,j\in [1..n]$.~Matrix residuals are related with matrix multiplication by the following \textit{residuation} (\textit{adjunction}) \textit{property}:
\begin{equation}\label{eq:res.prop}
    K\cdot M \leqslant N\ \ \Leftrightarrow\ \ M\leqslant K\backslash N\ \ \Leftrightarrow\ \ K\leqslant N/M,
\end{equation}
for all $K,M,N\in \mathbb{L}^{\!\!n\times n}$.~Next, for $\alpha,\beta\in \mathbb{L}^{\!\!n}$, the 
\textit{right residual} of $\beta $ by $\alpha $ is a matrix $\alpha\backslash\beta \in \mathbb{L}^{\!\!n\times n}$ given by
\begin{equation}\label{eq:r.res.vec}
    (\alpha\backslash\beta)(i,j)=\alpha(i)\to \beta(j),
\end{equation}
for all $i,j\in [1..n]$, and the \textit{left residual} of $\beta $ by $\alpha $ is a matrix $\beta/\alpha \in \mathbb{L}^{\!\!n\times n}$ given by
\begin{equation}\label{eq:l.res.vec}
    (\beta/\alpha)(j,i)=\alpha(i)\to \beta(j),
\end{equation}
for all $i,j\in [1..n]$. It is clear that $\alpha\backslash\beta=(\beta/\alpha)^\top$, that is, $\beta/\alpha=(\alpha\backslash\beta)^\top$ (here $M^\top $ denotes the \textit{transpose}~of a matrix $M$).~The \textit{residuation property} for these residuals is
\begin{equation}\label{eq:res.prop.vec}
    \alpha \cdot M\leqslant \beta \ \ \Leftrightarrow\ \ M \leqslant \alpha\backslash \beta , \qquad \qquad M\cdot \alpha\leqslant \beta \ \ \Leftrightarrow\ \ M\leqslant \beta/\alpha ,
\end{equation}
for all $\alpha,\beta\in \mathbb{L}^{\!\!n}$ and $M\in \mathbb{L}^{\!\!n\times n}$.

The~representation of the matrix $M\in \mathbb{L}^{\!\!m\times n}$ in the form of the product $M=L\cdot R$, where $L\in \mathbb{L}^{\!\!m\times r}$ and $R\in \mathbb{L}^{\!\!r\times n}$, is called the \textit{$r$-factorization} of that matrix $M$.~The smallest number $r$ for which there is an $r$-factori\-zation of the matrix $M$ is denoted by $\varrho(M)$ and is called the \textit{Schein's rank} or \textit{factor rank}~of~$M$. The concepts of $r$-factorization and Schein's rank are defined for arbitrary matrices, but have particularly good properties when applied to fuzzy quasi-order matrices (cf. \cite{SCB.18}).~As shown in \cite{ICST.15,SCI.14},~a~fuzzy quasi-order matrix $Q$ has the same number of different row vectors and different column vectors, which is denoted by $d(Q)$.~In the general case, $\varrho(Q)\leqslant d(Q)$, but for every fuzzy quasi-order matrix $Q$ with entries in~a complete residuated lattice $\mathbb{L}$ in which for all $a,b\in \mathbb{L}$ from $a\lor b=1$ it follows $a=1$ or $b=1$ (for instance, this holds if $\mathbb{L}$ is linearly ordered), we have that $\varrho(Q)= d(Q)$ (cf. \cite{SCB.18}).

For undefined notions and notation we refer to \cite{Bel.02,BV.05}.

\section{Fuzzy finite automata and the state reduction problem}

Throughout this paper, if not noted otherwise, $\mathbb{L}$ will denote an arbitrary complete residuated lattice and $X$ will denote an arbitrary non-empty alphabet. By $X^*$ we denote the free monoid over $X$, whose identity, called the \textit{empty word}, is denoted by $\varepsilon$, while by $X^+$ we denote the free semigroup over $X$, i.e., $X^+=X^*\setminus\{\varepsilon\}$.

A \textit{fuzzy finite automaton} over $\mathbb{L}$ and $X$ is defined as a tuple $\mathpzc{A}=(A,\sigma,\delta,\tau)$, where $A$ is a non-empty finite set, while $\sigma$, $\tau$ and $\delta $ are functions such that $\sigma,\tau:A\to \mathbb{L}$ and $\delta:A\times X\times A\to \mathbb{L}$.~The function $\delta $ is often replaced by the family of functions $\{\delta_x\}_{x\in X}$, where $\delta_x:A\times A\to \mathbb{L}$ is given by $\delta_x(a,b)=\delta(a,x,b)$, for all $a,b\in A$ and $x\in X $. We call $A$ the \textit{set of states}, $\sigma $ the \textit{fuzzy set of initial states}, $\tau $ the \textit{fuzzy set of terminal states}, and $\delta $ and $\delta_x$, $x\in X$, the \textit{fuzzy transition functions}. The number of states of $\mathpzc{A}$ will be denoted by $|\mathpzc{A}|$.

The \textit{behavior} of the fuzzy finite automaton $\mathpzc{A}$ is a function $\beh{\mathpzc{A}}:X^*\to \mathbb{L}$ (i.e., a fuzzy subset of $X^*$) defined by
\begin{equation}\label{eq:behA}
\beh{\mathpzc{A}}(u)=\bigvee_{(a_0,a_1,\ldots,a_k)\in A^{k+1}} \sigma(a_0)\otimes \delta(a_0,x_1,a_1)\otimes \delta(a_1,x_2,a_2) \cdots \delta(a_{k-1},x_k,a_k)\otimes \tau (a_k), 
\end{equation}
for $u=x_1x_2\dots x_k\in X^+$, $x_1,x_2,\ldots ,x_k\in X$, and 
\begin{equation}\label{eq:behAe}
\beh{\mathpzc{A}}(\varepsilon)=\bigvee_{a\in A} \sigma(a)\otimes \tau (a). 
\end{equation}
We say that $\beh{\mathpzc{A}}$ is the \textit{fuzzy language recognized (accepted) by} the fuzzy finite automaton $\mathpzc{A}$,~or~in~short just the \textit{fuzzy language of} $\mathpzc{A}$.

As we noted in the previous section, fuzzy subsets of a finite set with $n$ elements can be treated as fuzzy vectors. i.e., as vectors from $\mathbb{L}^{\!\!n}$, while fuzzy relations on such a set can be treated as fuzzy matrices, i.e. as matrices from $\mathbb{L}^{\!\!n\times n}$.~When such a way of viewing is applied to the fuzzy finite autom\-aton~$\mathpzc{A}$, then $\sigma $ and $\tau $ are treated as vectors from $\mathbb{L}^{\!\!n}$, called respectively the \textit{initial fuzzy vector} and \textit{terminal fuzzy vector}, while    $\delta_x$, $x\in X$, are treated as matrices from $\mathbb{L}^{\!\!n\times n}$, called the \textit{transition fuzzy matrices}.~Then the tuple
 is called the \textit{linear representation} of the fuzzy finite autom\-aton $\mathpzc{A}$, while $n$ is called the \textit{dimension} of $\mathpzc{A}$.~Such a way of looking at $\sigma $, $\tau $ and $\delta_x$'s will be applied here as well. In the vector-matrix form, the behavior of the fuzzy finite automaton $\mathpzc{A}$ is represented as follows:
\begin{equation}\label{eq:behAl}
\beh{\mathpzc{A}}(u)=\sigma \cdot \delta_{x_1}\cdot \delta_{x_2}\cdots \delta_{x_s}\cdot \tau = \sigma\cdot \delta_u\cdot \tau , 
\end{equation}
for $u=x_1x_2\dots x_s\in X^+$, $x_1,x_2,\ldots ,x_s\in X$, where $\delta_u=\delta_{x_1}\cdot \delta_{x_2}\cdots \delta_{x_k}$, and 
\begin{equation}\label{eq:behAel}
\beh{\mathpzc{A}}(\varepsilon)=\sigma \cdot \tau . 
\end{equation}
As can be seen, here we treat $\sigma $ as a row vector and $\tau $ as a column vector.

Two fuzzy finite automata $\mathpzc{A}$ and $\mathpzc{B}$ over $\mathbb{L}$ and $X$ are said to be \textit{equivalent} if 
\begin{equation}\label{eq:eqiuv}
    \beh{\mathpzc{A}}(u)=\beh{\mathpzc{B}}(u),\qquad \text{for every $u\in X^*$},
\end{equation}
i.e., if $\beh{\mathpzc{A}}=\beh{\mathpzc{B}}$, and for an arbitrary $k\in \mathbb{N}$, we say that $\mathpzc{A}$ and $\mathpzc{B}$ are  \textit{$k$-equivalent}~if 
\begin{equation}\label{eq:k.eqiuv}
    \beh{\mathpzc{A}}(u)=\beh{\mathpzc{B}}(u),\qquad \text{for every $u\in X^*$ such that $|u|\leqslant k$}.
\end{equation}
If $\mathpzc{A}$ and $\mathpzc{B}$ are equivalent or $k$-equivalent, we also say that $\mathpzc{A}$ is \textit{equivalent to} or \textit{$k$-equivalent to} $\mathpzc{B}$, and vice versa.

Let $Q\in \mathbb{L}^{\!\!n\times n}$ be an arbitrary fuzzy quasi-order matrix. Let $1 \le i_1 \le \ldots \le i_k \le n$ be the smallest indices of all pairwise distinct rows of $Q$. As mentioned earlier \cite{ICST.15,SCI.14}, they are also the smallest indices of all pairwise distinct columns of $Q$. Let $Q_r \in \mathbb{L}^{\!\!k\times n}$ (respectively, $Q_c \in \mathbb{L}^{\!\!n\times k}$) be the matrix consisting of the rows (respectively, columns) $i_1, \ldots, i_k$ of $Q$. 
Let $\mathpzc{A}=(A,\sigma,\delta,\tau)$ be a fuzzy finite automaton over $\mathbb{L}$ and $X$ with $n$ states. Then, a new fuzzy finite automaton, with the linear representation
\[
\mathpzc{A}_Q=(k,\sigma^Q,\{\delta^Q_x\}_{x\in X},\tau^Q),
\]
is defined by putting that 
\begin{align*}
    &\sigma^{Q} = \sigma \cdot Q_c, \\
    &\delta^{Q}_x =  Q_r \cdot \delta_x \cdot Q_c,\quad \textrm{for all $x\in X$}, \\
    &\tau^{Q} =  Q_r \cdot \tau. 
\end{align*}
It is clear that $\sigma^Q,\tau^Q\in \mathbb{L}^{\!\!k}$ and  $\delta_x^Q\in \mathbb{L}^{\!\!k\times k}$, for each $x\in X$, so  $\mathpzc{A}_{Q}$ is a well-defined fuzzy finite automaton. Since $Q=Q_c\cdot Q_r$ (cf.~\cite[Theorem 4.1]{SCB.18}, the behavior of $\mathpzc{A}_{Q}$ is given in the vector-matrix form by
\begin{equation}\label{eq:behAQ}
    \beh{\mathpzc{A}_{Q}}(u)=\sigma\cdot Q\cdot \delta_{x_1}\cdot Q\cdot \delta_{x_2}\cdot \cdots \cdot Q\cdot \delta_{x_s}\cdot Q\cdot \tau ,
\end{equation}
for every $u=x_1x_2\ldots x_s\in X^+$, where $x_1,x_2,\ldots ,x_s\in X$, and 
\begin{equation}\label{eq:behAQe}
    \beh{\mathpzc{A}_{Q}}(\varepsilon)=\sigma\cdot Q\cdot \tau .
\end{equation}
As we mentioned earlier, fuzzy matrices can be identified with fuzzy relations between finite sets. Since in the theory of fuzzy sets, the "rows" of fuzzy relations are known as \textit{aftersets}, and the "columns" are known as \textit{foresets}, the fuzzy finite automaton $\mathpzc{A}_{Q}$ was called in \cite{SCI.14} the \textit{afterset automaton} of $\mathpzc{A}$ corres\-ponding to the fuzzy quasi-order matrix $Q$, but it can also be rightly called the \textit{row automaton}.~It~was also shown in \cite{SCI.14} that if in the construction of the automaton $\mathpzc{A}_{Q}$ instead of the rows (aftersets) of the matrix Q we use its columns (foresets), then essentially nothing changes, because an isomorphic fuzzy finite automaton is obtained. 

Let us note that the number of states of $\mathpzc{A}_{Q}$ is $d(Q)\leqslant n$, that is, $\mathpzc{A}_{Q}$ has less than or equal number of states with $\mathpzc{A}$. Therefore, by constructing the automaton $\mathpzc{A}_{Q}$ we can reduce the number of states of the automaton $\mathpzc{A}$, provided that these two automata are equivalent.~Consequently, the main question here is \textit{under what conditions on $Q$ the automaton $\mathpzc{A}_{Q}$ is equivalent to $\mathpzc{A}$}?~This question can also~be formulated as: \textit{under what conditions on $Q$ the construction of $\mathpzc{A}_{Q}$ preserves the fuzzy language of the automaton $\mathpzc{A}$}?~More answers to these~questions were provided in \cite{CSIP.10} and \cite{SCI.14}.~\textit{Right invariant matrices} were defined in \cite{CSIP.10,SCI.14} as solutions of the system of matrix equations
\begin{align*}
    &(\text{ri-1})\ \ U\cdot \tau \leqslant \tau; \\
    &(\text{ri-2})\ \ U\cdot \delta_x\leqslant \delta_x\cdot U, \quad \text{for all $x\in X$},
\end{align*}
where $U$ is an unknown matrix taking values in $\mathbb{L}^{\!\!n\times n}$, and \textit{left invariant matrices} were defined as solutions of the system
\begin{align*}
    &(\text{li-1})\ \ \sigma\cdot U\leqslant \sigma; \\
    &(\text{li-2})\ \ \delta_x\cdot U\leqslant U\cdot \delta_x, \quad \text{for all $x\in X$}.
\end{align*}
It was proved in \cite{CSIP.10} that both right and left invariant fuzzy equivalence matrices provide equivalence between $\mathpzc{A}$ and the corresponding row automaton, while the same for fuzzy quasi-order matrices was proved in \cite{SCI.14}.~At the same time, it has been proven that fuzzy quasi-order matrices give better reductions than fuzzy equivalence matrices, in the sense that they produce fuzzy finite automata with a smaller number of states.

Procedures for computing the greatest right and left invariant matrices, which are necessarily fuzzy quasi-order matrices, were provided in \cite{SCI.14}. For right invariant matrices, the procedure consists of building a decreasing sequence of matrices, which is defined in the next section by formula \eqref{eq:seq.Q}. When there are two equal consecutive members of that sequence, the sequence is finite and stabilizes at some member which is the greatest right invariant matrix.~For instance, if $\mathbb{L}$ is the 
G\"odel structure or Boolean algebra, then every such sequence is finite and the greatest right invariant matrix can be computed in a finite number of steps.~However, there are also cases when this sequence is infinite and the greatest right invariant matrix can not be computed in a finite number of steps.~For example, this may happen when $\mathbb{L}$ is the 
product structure.~All this also applies to left-invariant matrices, where the procedure for computing the greatest such matrix is based on the decreasing sequence of matrices defined in the next section by formula \eqref{eq:seq.P}.

In paper \cite{SCI.14}, another way was also provided to get a fuzzy quasi-order matrix $Q$ for which the~construction of $\mathpzc{A}_{Q}$ will preserve the language of $\mathpzc{A}$.~\textit{Weakly right invariant matrices} were defined in as solutions of the system of matrix equations
\begin{align*}
    &(\text{wri})\ \ U\cdot \tau_u \leqslant \tau_u, \quad \text{for all $u\in X^*$},
\end{align*}
where $U$ is an unknown matrix taking values in $\mathbb{L}^{\!\!n\times n}$ and $\tau_u=\delta_u\cdot \tau$, and \textit{weakly left invariant matrices} were defined as solutions of the system
\begin{align*}
    &(\text{wli})\ \ \sigma_u\cdot U\leqslant \sigma_u,\quad \text{for all $u\in X^*$},
\end{align*}
where $\sigma_u=\sigma\cdot \delta_u$.~As shown in \cite{SCI.14}, both for any weakly right invariant or weakly left invariant fuzzy quasi-order matrix $Q$, the row automaton $\mathpzc{A}_Q$ is equivalent to $\mathpzc{A}$, and the greatest weakly right invariant fuzzy quasi-order, i.e., the greatest solution of (wri), can be expressed as
\begin{equation}\label{eq:gwri}
    \bigwedge_{u\in X^*} \tau_u\slash \tau_u ,
\end{equation}
while the greatest weakly left invariant fuzzy quasi-order, i.e., the greatest solution of (wli), can be~expressed as
\begin{equation}\label{eq:gwli}
    \bigwedge_{u\in X^*} \sigma_u\backslash \sigma_u .
\end{equation}
In general, the greatest weakly right invariant fuzzy quasi-order matrix provides better reduction than the greatest right invariant one, but its computation may be significantly more complex.~There may also be a problem of efficient computation of these matrices, because the families $\{\tau_u\mid u\in X^*\}$ and $\{\sigma_u\mid u\in X^*\}$ can be infinite, and even when they are finite, the number of their~mem\-bers can be too large.

All the mentioned problems that concern the computation of the greatest invariant and weakly invariant fuzzy quasi-order matrices actualize the issue of approximate state reductions of fuzzy finite automata, which will be discussed in the next section.

Note that in \cite{SCB.18} a method for additional reduction of the number of states of the automaton $\mathpzc{A}_Q$ is offered, which is based on the $r$-factorization of the fuzzy quasi-order matrix $Q$. Namely, let $Q=L\cdot R$ be an $r$-factorization of $Q$, i.e., $L\in \mathbb{L}^{n\times r}$ and $R\in \mathbb{L}^{r\times n}$. Then we can construct a fuzzy finite automaton $\mathpzc{A}_{L|R}$ with the linear representation 
\[
\mathpzc{A}_{L|R}=(r,\sigma^{L|R},\{\delta_x^{L|R}\}_{x\in X},\tau^{L|R}),
\]
where 
\begin{align*}
&\sigma^{L|R}=\sigma\cdot L,\\
&\delta_x^{L|R}=R\cdot \delta_x\cdot L,\quad \text{for all $x\in X$}, \\
&\tau^{L|R}=R\cdot \tau .
\end{align*}
Then $\sigma^{L|R},\tau^{L|R}\in \mathbb{L}^r$ and $\delta_x^{L|R}\in \mathbb{L}^{r\times r}$, so $\mathpzc{A}_{L|R}$ is a well-defined fuzzy finite automaton with $r$ states. According to 
\eqref{eq:behAQ}, \eqref{eq:behAQe} and $Q=L\cdot R$ we have that $\beh{\mathpzc{A}_{L|R}}=\beh{\mathpzc{A}_Q}$.~Therefore, when we reduce the number of states of the fuzzy finite automaton $\mathpzc{A}$ using the greatest fuzzy quasi-order matrix $Q$, an addi\-tional reduction of the number of states could be performed with the help of an $r$-factorization of $Q$. Clearly, the smallest number of states we can obtain in this way is $\varrho(Q)$, the Schein's rank of $Q$.

\section{Approximate state reduction: $\boldsymbol{k}$-reduction}

As we already noted in the introduction, there were already several articles dealing with approximate state reduction, mainly in the context of studying approximate simulations and bisimulations between fuzzy  finite automata.~The approach used in those articles was to, starting from a given fuzzy finite automaton, construct a new fuzzy finite automaton with a smaller number of states, whose fuzzy language does not have to be identical to the fuzzy language of the original automaton, but is sufficiently similar to that fuzzy language, in relation to a certain measure of similarity.~In most papers, that measure was based on subsethood and equality degrees between fuzzy sets.

In this paper, we use a different approach.~We require that between fuzzy languages of the original and the reduced fuzzy finite automaton there is an exact match for all words of length less than or equal to $k$, while membership degrees for longer words do not matter to us. In the terminology from the previous section, this means that these fuzzy languages are $k$-equivalent.~Such an approach is quite natural if we keep in mind that in practical applications of automata the length of input words is always limited, so the only thing that matters to us is that the membership degrees match for words whose length does not exceed that limit.

A procedure for reduction of the number of states of a fuzzy finite automaton $\mathpzc{A}$, which results in a fuzzy finite automaton that is $k$-equivalent to $\mathpzc{A}$, will be called a \textit{$k$-reduction}.~The following theorem provides such a procedure that is based on the procedure for reduction of the number of states, provided in \cite{SCI.14}, that results in an automaton which is strictly equivalent to the original automaton.

\pagebreak

\begin{theorem}\label{th:ri.k-e}
Let $\mathpzc{A}=(A,\sigma,\delta,\tau)$ be a fuzzy finite automaton over $\mathbb{L}$ and $X$ with $n$ states. 

Let us inductively define a sequence of matrices $\{Q_k\}_{k\in \mathbb{N}}\subset \mathbb{L}^{\!\!n\times n}$ as follows:
\begin{equation}\label{eq:seq.Q}
 Q_0=\tau\slash \tau , \qquad Q_{k+1}=Q_k\land \bigwedge_{x\in X} [(\delta_x\cdot Q_k)\slash \delta_x], \quad\text{for every}\ k\in \mathbb{N}.
\end{equation}
Then for an arbitrary $k\in \mathbb{N}$ the following statements hold:
\begin{itemize}\parskip=-2pt
\item[{\rm (a)}] $Q_k$ is a fuzzy quasi-order matrix;
\item[{\rm (b)}] $\mathpzc{A}_{Q_k}$ is $k$-equivalent to $\mathpzc{A}$;
\item[{\rm (c)}] if $Q_k=L\cdot R$ is an $r$-factorization of $Q_k$, for some $r\leqslant d(Q_k)$, then $\mathpzc{A}_{L|R}$ is $k$-equivalent to $\mathpzc{A}$;
\item[{\rm (d)}] if $Q_s=Q_{s+1}$, for some $s\leqslant k$, then $Q_k=Q_s$ and both $\mathpzc{A}_{Q_k}$ and $\mathpzc{A}_{L|R}$ are equivalent to $\mathpzc{A}$.
\end{itemize}
\end{theorem}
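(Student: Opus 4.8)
The plan is to establish the four parts in the stated order, with (b) carrying essentially all of the work; (a) is a routine induction, and (c), (d) follow quickly from (b) together with facts recalled in the previous section. For (a) I would induct on $k$. The base step $Q_0=\tau/\tau$ is the standard observation that a residual $\alpha/\alpha$ is reflexive (since $\tau(i)\to\tau(i)=1$) and transitive (since $(b\to a)\otimes(c\to b)\leqslant c\to a$). For the inductive step, abbreviate $\phi(Q)=Q\wedge\bigwedge_{x\in X}[(\delta_x\cdot Q)/\delta_x]$, so $Q_{k+1}=\phi(Q_k)$, and show that $\phi$ sends fuzzy quasi-order matrices to fuzzy quasi-order matrices. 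Reflexivity of $\phi(Q)$ follows since $I_n\leqslant Q$ gives $\delta_x=\delta_x\cdot I_n\leqslant\delta_x\cdot Q$, which by \eqref{eq:res.prop} reads $I_n\leqslant(\delta_x\cdot Q)/\delta_x$. For transitivity, the key identity is $\phi(Q)\cdot\delta_x\leqslant\delta_x\cdot Q$ (the meet component $(\delta_x\cdot Q)/\delta_x$ viewed through \eqref{eq:res.prop}); multiplying it on the left by $\phi(Q)$ and using $\phi(Q)\leqslant Q$ and $Q^2\leqslant Q$ gives $\phi(Q)^2\cdot\delta_x\leqslant\delta_x\cdot Q$, hence $\phi(Q)^2\leqslant(\delta_x\cdot Q)/\delta_x$, and with $\phi(Q)^2\leqslant Q^2\leqslant Q$ this yields $\phi(Q)^2\leqslant\phi(Q)$. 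Along the way I would record the facts used repeatedly below: the sequence is descending, so $Q_k\leqslant Q_j$ whenever $j\leqslant k$, and $Q_{k+1}\cdot\delta_x\leqslant\delta_x\cdot Q_k$ for all $x$, while $Q_k\cdot\tau\leqslant\tau$.

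For (b), one inequality is free: since $Q_k$ is reflexive, $I_n\leqslant Q_k$, and monotonicity of the products in \eqref{eq:behAQ}--\eqref{eq:behAQe} gives $\beh{\mathpzc{A}}(u)\leqslant\beh{\mathpzc{A}_{Q_k}}(u)$ for every word $u$. The content is the reverse inequality for $|u|\leqslant k$, and the engine is the lemma
\[
Q_k\cdot\delta_{x_1}\cdot Q_k\cdot\delta_{x_2}\cdots\delta_{x_j}\cdot Q_k\ \leqslant\ \delta_{x_1}\cdot\delta_{x_2}\cdots\delta_{x_j}\cdot Q_{k-j}\qquad(0\leqslant j\leqslant k),
\]
proved by induction on $j$: the step multiplies the bound for $j$ on the right by $\delta_{x_{j+1}}\cdot Q_k$, uses $Q_{k-j}\cdot\delta_{x_{j+1}}\leqslant\delta_{x_{j+1}}\cdot Q_{k-j-1}$ to slide $\delta_{x_{j+1}}$ past $Q_{k-j}$, and absorbs the trailing factor via $Q_{k-j-1}\cdot Q_k\leqslant Q_{k-j-1}^2\leqslant Q_{k-j-1}$. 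Applying this with $j=s\leqslant k$ to $u=x_1\cdots x_s$ and then $Q_{k-s}\cdot\tau\leqslant Q_0\cdot\tau\leqslant\tau$ gives $\beh{\mathpzc{A}_{Q_k}}(u)=\sigma\cdot(Q_k\cdot\delta_{x_1}\cdots\delta_{x_s}\cdot Q_k)\cdot\tau\leqslant\sigma\cdot\delta_{x_1}\cdots\delta_{x_s}\cdot\tau=\beh{\mathpzc{A}}(u)$; the word $\varepsilon$ is handled by $\sigma\cdot Q_k\cdot\tau\leqslant\sigma\cdot\tau$. Together with the free inequality this is $k$-equivalence.

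Part (c) is then immediate: as recalled in the previous section, $Q_k=L\cdot R$ implies $\beh{\mathpzc{A}_{L|R}}=\beh{\mathpzc{A}_{Q_k}}$, so (b) transfers verbatim. For (d), the hypothesis $Q_s=Q_{s+1}$ says $Q_s$ is a fixpoint of $\phi$; iterating gives $Q_m=Q_s$ for all $m\geqslant s$, in particular $Q_k=Q_s$ since $k\geqslant s$. Moreover $Q_s=\phi(Q_s)$ forces $Q_s\cdot\delta_x\leqslant\delta_x\cdot Q_s$ for all $x$ and $Q_s\cdot\tau\leqslant\tau$, i.e.\ $Q_s$ solves (ri-1) and (ri-2); one may then invoke the result of \cite{SCI.14} that a row automaton built from a right invariant fuzzy quasi-order matrix is equivalent to the original, or rerun the argument of (b) with $Q_s$ in place of $Q_k$ — now $Q_s\cdot\delta_{x_1}\cdots\delta_{x_\ell}\cdot Q_s\leqslant\delta_{x_1}\cdots\delta_{x_\ell}\cdot Q_s$ holds for words of arbitrary length $\ell$, since the index never drops below $s$ — to obtain $\beh{\mathpzc{A}_{Q_k}}=\beh{\mathpzc{A}}$ and hence $\beh{\mathpzc{A}_{L|R}}=\beh{\mathpzc{A}}$ by (c).

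The only genuine obstacle is the index-bookkeeping lemma in (b): one must verify that each time a transition matrix is pushed leftward past a copy of some $Q_j$ the subscript drops by exactly one, so that a word of length $s\leqslant k$ uses $s$ pushes and lands on $Q_{k-s}\leqslant Q_0$, where $Q_0\cdot\tau\leqslant\tau$ closes the estimate. This is also why the construction yields only $k$-equivalence for a general $\mathbb{L}$, whereas a stabilization $Q_s=Q_{s+1}$ removes the budget constraint and upgrades the conclusion to full equivalence.
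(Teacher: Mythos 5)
Your proposal is correct and follows essentially the same route as the paper: reflexivity/transitivity of $Q_k$ by induction using the residuation property, the free inequality $\beh{\mathpzc{A}}(u)\leqslant\beh{\mathpzc{A}_{Q_k}}(u)$ from $I_n\leqslant Q_k$, and the reverse inequality for $|u|\leqslant k$ by sliding each $\delta_{x_i}$ past a copy of the quasi-order while decrementing its index down to $Q_0$ with $Q_0\cdot\tau\leqslant\tau$ (the paper packages this as the identities $Q_t\cdot\tau=\tau$ and $Q_{t+1}\cdot\delta_x\cdot Q_t=\delta_x\cdot Q_t$ together with $Q_k\leqslant Q_{s-i}$, which is your lemma in equational form). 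Parts (c) and (d) are handled exactly as in the paper, via $\beh{\mathpzc{A}_{L|R}}=\beh{\mathpzc{A}_{Q_k}}$ and the stabilization/right-invariance argument from \cite{SCI.14}.
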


\begin{proof} (a) By its definition, $Q_0$ is the greatest solution of the inequation $U\cdot \tau\leqslant \tau$, where $U$ is an unknown matrix taking values in $\mathbb{L}^{\!\!n\times n}$.~Since $I_n$ is also a solution to this inequation, we conclude that $I_n\leqslant Q_0$.~Moreover, we have that
\[
Q_0^2\cdot \tau =Q_0\cdot Q_0\cdot \tau \leqslant Q_0\cdot \tau \leqslant \tau,
\]
which means that $Q_0^2$ is also a solution of $U\cdot \tau\leqslant \tau $, and since $Q_0$ is the greatest solution to this inequation, we conclude that $Q_0^2\leqslant Q_0$. This proves that $Q_0$ is a fuzzy quasi-order matrix.

Suppose that $Q_s$ is a fuzzy quasi-order matrix, for some $s\in \mathbb{N}_0$.~Let us observe that for every $x\in X$ the matrix $M_x=(\delta_x\cdot Q_s)\slash \delta_x$ is the greatest solution of the inequation $U\cdot \delta_x\leqslant \delta_x\cdot Q_s$, with an unknown matrix $U$. We also have that $I_n\leqslant Q_s$, whence
\[
I_n\cdot \delta_x=\delta_x\cdot I_n\leqslant \delta_x\cdot Q_s,
\]
which means that $I_n$ is also a solution to the inequation $U\cdot \delta_x\leqslant \delta_x\cdot Q_s$, and consequently, $I_n\leqslant M_x$, since $M_x$ is the greatest solution of this inequation.

On the other hand, we have that
\[
M_x^2\cdot \delta_x=M_x\cdot M_x\cdot \delta_x\leqslant M_x\cdot \delta_x\cdot Q_s\leqslant \delta_x\cdot Q_s\cdot Q_s=\delta_x\cdot Q_s^2= \delta_x\cdot Q_s,
\]
which means that $M_x^2$ is also a solution of $U\cdot \delta_x\leqslant \delta_x\cdot Q_s$, and thus, $M_x^2\leqslant M_x$. Hence, $M_x=(\delta_x\cdot Q_s)\slash \delta_x$ is a fuzzy quasi-order matrix, for each $x\in X$, and the matrix $Q_{s+1}$ is also a fuzzy quasi-order~matrix, as the intersection of the family of fuzzy quasi-order matrices $M_x=(\delta_x\cdot Q_s)\slash \delta_x$, $x\in X$,~and~the fuzzy quasi-order matrix $Q_s$.

Let us also note that $Q_{s+1}$ is the greatest solution of the system of linear matrix inequations
\begin{equation}\label{eq:syst.lmi}
   U\cdot \delta_x\leqslant \delta_x\cdot Q_s,\quad x\in X , 
\end{equation}
contained in $Q_s$, where $U$ is an unknown matrix taking values in $\mathbb{L}^{\!\!n\times n}$. 

(b) According to \eqref{eq:seq.Q}, we have that $Q_0\cdot \tau \leqslant \tau $ and $Q_{t+1}\cdot \delta_x\leqslant \delta_x\cdot Q_t$, for arbitrary $t\in \mathbb{N}_0$ and $x\in X$. On the other hand, from $Q_t\leqslant Q_0$ it follows that $Q_t\cdot \tau\leqslant Q_0\cdot \tau \leqslant \tau $, whereas from $I_n\leqslant Q_t$ we obtain that $\tau=I_n\cdot \tau \leqslant Q_t\cdot \tau $.~Furthermore, since $Q_{t+1}\cdot \delta_x\leqslant \delta_x\cdot Q_t$, $Q_t^2=Q_t$ and $I_n\leqslant Q_{t+1}$, we have that
\[
Q_{t+1}\cdot \delta_x\cdot Q_t\leqslant \delta_x\cdot Q_t^2=\delta_x\cdot Q_t, \qquad
\delta_x\cdot Q_t=I_n\cdot \delta_x\cdot Q_t\leqslant Q_{t+1}\cdot \delta_x\cdot Q_t.
\]
Therefore, we have proved that
\begin{equation}\label{eq:Qs}
Q_t\cdot \tau=\tau, \quad Q_{t+1}\cdot \delta_x\cdot Q_t=\delta_x\cdot Q_t,
\end{equation}
for all $t\in \mathbb{N}_0$ and $x\in X$. 

Now we have that
\[
\beh{\mathpzc{A}_{Q_k}}(\varepsilon)=\sigma\cdot Q_k\cdot \tau = \sigma\cdot\tau = \beh{\mathpzc{A}}(\varepsilon),
\]
and for $u=x_1x_2\ldots x_s\in X^+$, where $x_1,x_2,\ldots ,x_s\in X$ and $s\leqslant k$, we have that
\[
\beh{\mathpzc{A}}(u)=\sigma\cdot \delta_{x_1}\cdot \delta_{x_2}\cdot \ldots \cdot \delta_{x_s}\cdot \tau \leqslant \sigma\cdot Q_k\cdot \delta_{x_1}\cdot Q_k\cdot \delta_{x_2}\cdot \ldots \cdot Q_k\cdot \delta_{x_s}\cdot Q_k\cdot \tau = \beh{\mathpzc{A}_{Q_k}}(u),
\]
and
\begin{align*}
    \beh{\mathpzc{A}_{Q_k}}(u)&= \sigma\cdot Q_k\cdot \delta_{x_1}\cdot Q_k\cdot \delta_{x_2}\cdot \ldots \cdot Q_k\cdot \delta_{x_s}\cdot Q_k\cdot \tau \leqslant 
    \sigma\cdot Q_s\cdot \delta_{x_1}\cdot Q_{s-1}\cdot \delta_{x_2}\cdot \ldots \cdot Q_1\cdot \delta_{x_s}\cdot Q_0\cdot \tau \\
    &= \sigma\cdot \delta_{x_1}\cdot Q_{s-1}\cdot \delta_{x_2}\cdot \ldots \cdot Q_1\cdot \delta_{x_s}\cdot Q_0\cdot \tau =\sigma\cdot \delta_{x_1}\cdot \delta_{x_2}\cdot \ldots \cdot Q_1\cdot \delta_{x_s}\cdot Q_0\cdot \tau \\
    &= \ldots = \sigma\cdot \delta_{x_1}\cdot \delta_{x_2}\cdot \ldots \cdot \delta_{x_s}\cdot Q_0\cdot \tau = \sigma\cdot \delta_{x_1}\cdot \delta_{x_2}\cdot \ldots \cdot \delta_{x_s}\cdot\tau = \beh{\mathpzc{A}}(u).
\end{align*}
Therefore, we have proved that $\beh{\mathpzc{A}_{Q_k}}(u)=\beh{\mathpzc{A}}(u)$, for every $u\in X^*$ such that $|u|=s\leqslant k$, which means that the fuzzy automaton $\mathpzc{A}_{Q_k}$ is $k$-equivalent to $\mathpzc{A}$.

(c) This follows by the fact that $\beh{\mathpzc{A}_{Q_k}}=\beh{\mathpzc{A}_{L|R}}$ (cf. \cite{SCB.18}).

(d) Assume that $Q_s=Q_{s+1}$, for some $s\leqslant k$.~As shown in \cite{SCI.14}, then $Q_s=Q_t$, for every $t\geqslant s$, $t\in \mathbb{N}_0$, and $\beh{\mathpzc{A}_{Q_s}}=\beh{\mathpzc{A}}$ holds. Therefore, we have that $Q_k=Q_s$, so $\beh{\mathpzc{A}_{Q_k}}=\beh{\mathpzc{A}_{L|R}}=\beh{\mathpzc{A}_{Q_s}}=\beh{\mathpzc{A}}$.
\end{proof}

It is worth noting that in assertion (c) of Theorem \ref{th:ri.k-e} we are talking about the $r$-decomposition of the matrix $Q_k$ for some $r\in \mathbb{N}$ for which $\varrho(Q_k)\leqslant r\leqslant d (Q_k)$, rather than on the~$\varrho(k)$-decom\-position, which would give a smallest row automaton.~The reason is that the $r$-decomposition could be done using an algorithm provided in \cite{SCB.18}, which consists in removing those row vectors of $Q_k$ that can be represented as linear combinations of other row vectors.~The result of that procedure strongly~depends on the choice and order of the vectors we remove, so that one can get an $r$-decomposition for any $r$ such that $\varrho(Q_k)\leqslant r\leqslant d (Q_k)$. 

Let us also note that the use of the $r$-decompositions in $k$-reductions (and reductions in general) is useless if the underlying complete residuated lattice $\mathbb{L}$ satisfies the condition that $a\lor b=1$ implies that $a=1$ or $b=1$, since in this case we have that $\varrho(Q_k)=d(Q_k)$ (as shown in \cite{SCB.18}), so there is no reduction whatsoever. For instance, this holds when $\mathbb{L}$ is linearly ordered, what includes all the cases when $\mathbb{L}$ is the structure defined on the real unit interval by means of triangular norms.

Similarly to Theorem \ref{th:ri.k-e}, we can prove the following theorem, which provides an alternative way for $k$-reduction.

\begin{theorem}\label{th:li.k-e}
Let $\mathpzc{A}=(A,\sigma,\delta,\tau)$ be a fuzzy finite automaton over $\mathbb{L}$ and $X$ with $n$ states. 

Let us inductively define a sequence of matrices $\{P_k\}_{k\in \mathbb{N}}\subset \mathbb{L}^{\!\!n\times n}$ as follows:
\begin{equation}\label{eq:seq.P}
 P_0=\sigma\backslash \sigma , \qquad P_{k+1}=P_k\land \bigwedge_{x\in X} [\delta_x\backslash (P_k\cdot \delta_x)], \quad\text{for every}\ k\in \mathbb{N}.
\end{equation}
Then for an arbitrary $k\in \mathbb{N}$ the following statements hold:
\begin{itemize}\parskip=-2pt
\item[{\rm (a)}] $P_k$ is a fuzzy quasi-order matrix;
\item[{\rm (b)}] $\mathpzc{A}_{P_k}$ is $k$-equivalent to $\mathpzc{A}$;
\item[{\rm (c)}] if $P_k=L\cdot R$ is an $r$-factorization of $P_k$, for some $r\leqslant d(P_k)$, then $\mathpzc{A}_{L|R}$ is $k$-equivalent to $\mathpzc{A}$;
\item[{\rm (d)}] if $P_s=P_{s+1}$, for some $s\leqslant k$, then $P_k=P_s$ and both $\mathpzc{A}_{P_k}$ and $\mathpzc{A}_{L|R}$ are equivalent to $\mathpzc{A}$.
\end{itemize}
\end{theorem}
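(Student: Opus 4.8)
The plan is to mirror the proof of Theorem~\ref{th:ri.k-e} almost verbatim, exploiting the duality between the right-invariant system (ri-1)--(ri-2) and the left-invariant system (li-1)--(li-2), which amounts to transposing matrices and swapping the roles of $\sigma$ and $\tau$ together with the roles of left and right residuals. Concretely, one checks that $P_k = Q_k^\top$ when the sequence $\{Q_k\}$ is built for the \emph{reverse} automaton $\mathpzc{A}^{\mathrm{rev}}=(A,\tau,\{\delta_x^\top\}_{x\in X},\sigma)$, whose behavior satisfies $\beh{\mathpzc{A}^{\mathrm{rev}}}(u)=\beh{\mathpzc{A}}(u^{\mathrm{rev}})$; since reversal is a length-preserving bijection on $X^*$, $k$-equivalence is preserved under it, and the afterset construction commutes with transposition up to isomorphism. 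So in principle (a)--(d) for $\{P_k\}$ follow from (a)--(d) for $\{Q_k\}$ applied to $\mathpzc{A}^{\mathrm{rev}}$. For a self-contained write-up, however, I would instead just repeat the four steps directly.

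For (a): by definition $P_0=\sigma\backslash\sigma$ is the greatest solution of $\sigma\cdot U\leqslant\sigma$; since $I_n$ solves it, $I_n\leqslant P_0$, and since $\sigma\cdot P_0^2=\sigma\cdot P_0\cdot P_0\leqslant\sigma\cdot P_0\leqslant\sigma$ forces $P_0^2\leqslant P_0$, $P_0$ is a fuzzy quasi-order matrix. Inductively, assuming $P_s$ is a fuzzy quasi-order matrix, the matrix $N_x=\delta_x\backslash(P_s\cdot\delta_x)$ is the greatest solution of $\delta_x\cdot U\leqslant P_s\cdot\delta_x$ by \eqref{eq:res.prop}; reflexivity of $P_s$ gives $\delta_x\cdot I_n=I_n\cdot\delta_x\leqslant P_s\cdot\delta_x$, so $I_n\leqslant N_x$, and $\delta_x\cdot N_x^2\leqslant P_s\cdot\delta_x\cdot N_x\leqslant P_s\cdot P_s\cdot\delta_x=P_s\cdot\delta_x$ gives $N_x^2\leqslant N_x$. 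Hence each $N_x$ is a fuzzy quasi-order matrix, and $P_{s+1}=P_s\wedge\bigwedge_{x\in X}N_x$ is too, being a meet of fuzzy quasi-order matrices; moreover $P_{s+1}$ is the greatest solution, contained in $P_s$, of the system $\delta_x\cdot U\leqslant P_s\cdot\delta_x$, $x\in X$.

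For (b): from \eqref{eq:seq.P} we get $\sigma\cdot P_0\leqslant\sigma$ and $\delta_x\cdot P_{t+1}\leqslant P_t\cdot\delta_x$ for all $t\in\mathbb{N}$, $x\in X$; combining with $P_t\leqslant P_0$ and $I_n\leqslant P_t$ yields $\sigma\cdot P_t=\sigma$ and, using $P_t^2=P_t$, the identity $P_t\cdot\delta_x\cdot P_{t+1}=P_t\cdot\delta_x$ (the left-hand analogue of \eqref{eq:Qs}). Then $\beh{\mathpzc{A}_{P_k}}(\varepsilon)=\sigma\cdot P_k\cdot\tau=\sigma\cdot\tau=\beh{\mathpzc{A}}(\varepsilon)$, and for $u=x_1\cdots x_s$ with $s\leqslant k$ one direction is monotonicity (inserting $I_n\leqslant P_k$ factors), while the other direction peels off the $P_k$'s from the \emph{left} end using $P_k\leqslant P_s\leqslant P_{s-1}\leqslant\cdots\leqslant P_0$ together with $\sigma\cdot P_s=\sigma$ and the telescoping identities $P_t\cdot\delta_{x}\cdot P_{t+1}=P_t\cdot\delta_{x}$, arriving at $\sigma\cdot\delta_{x_1}\cdots\delta_{x_s}\cdot P_0\cdot\tau=\sigma\cdot\delta_{x_1}\cdots\delta_{x_s}\cdot\tau=\beh{\mathpzc{A}}(u)$. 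Part (c) is immediate from $\beh{\mathpzc{A}_{P_k}}=\beh{\mathpzc{A}_{L|R}}$ (cf.~\cite{SCB.18}), and part (d) follows exactly as before from the stabilization result of \cite{SCI.14} applied to the left-invariant sequence. The only real subtlety—and the one place to be careful in the write-up—is bookkeeping the order in which the $P_k$ factors are stripped: here the chain is consumed from the left ($P_s$ adjacent to $\sigma$ down to $P_0$ adjacent to $\tau$), which is the mirror image of the right-invariant case, so one must make sure the indices in the telescoping chain \eqref{eq:behAQ}-style expression decrease from $\sigma$ toward $\tau$ rather than the other way around.
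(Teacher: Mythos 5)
Your overall strategy is exactly the intended one (the paper does not write out this proof; it only remarks that it goes ``similarly to Theorem \ref{th:ri.k-e}''), and parts (a), (c), (d) together with the key identities $\sigma\cdot P_t=\sigma$ and $P_t\cdot\delta_x\cdot P_{t+1}=P_t\cdot\delta_x$ are all correct. However, the one step you yourself flag as the delicate point is stated backwards, and as written it fails. The identity $P_t\cdot\delta_x\cdot P_{t+1}=P_t\cdot\delta_x$ absorbs a $P$-factor whose index is one \emph{larger} than that of the $P$-factor to its left, so the telescoping chain must have indices \emph{increasing} from $\sigma$ toward $\tau$:
\[
\beh{\mathpzc{A}_{P_k}}(u)\leqslant \sigma\cdot P_0\cdot\delta_{x_1}\cdot P_1\cdot\delta_{x_2}\cdot P_2\cdots P_{s-1}\cdot\delta_{x_s}\cdot P_s\cdot\tau
\]
(valid because $P_k\leqslant P_j$ for every $j\leqslant s\leqslant k$), after which one peels from the $\tau$ end: first $P_{s-1}\cdot\delta_{x_s}\cdot P_s=P_{s-1}\cdot\delta_{x_s}$, then $P_{s-2}\cdot\delta_{x_{s-1}}\cdot P_{s-1}=P_{s-2}\cdot\delta_{x_{s-1}}$, and so on, down to $\sigma\cdot P_0\cdot\delta_{x_1}\cdots\delta_{x_s}\cdot\tau=\sigma\cdot\delta_{x_1}\cdots\delta_{x_s}\cdot\tau$, the last step by $\sigma\cdot P_0=\sigma$. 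Your version instead puts $P_s$ next to $\sigma$ and $P_0$ next to $\tau$, with indices decreasing toward $\tau$ --- that is the \emph{same} orientation as in the right-invariant case, not its mirror image. With that arrangement no triple in the chain matches the pattern $P_t\cdot\delta_x\cdot P_{t+1}$, and your terminal equality $\sigma\cdot\delta_{x_1}\cdots\delta_{x_s}\cdot P_0\cdot\tau=\sigma\cdot\delta_{x_1}\cdots\delta_{x_s}\cdot\tau$ would need $P_0\cdot\tau=\tau$, which does not follow from $P_0=\sigma\backslash\sigma$ (that is the right-invariant property (ri-1), not the left-invariant one). Once the orientation is reversed the argument closes; the reduction to the reverse automaton sketched in your first paragraph is also a valid alternative route.
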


\pagebreak

The third way to perform the $k$-reduction is given by the next theorem.

\begin{theorem}\label{th:wri.k-e}
Let $\mathpzc{A}=(A,\sigma,\delta,\tau)$ be a fuzzy finite automaton over $\mathbb{L}$ and $X$ with $n$ states. 

Let us define a sequence of matrices $\{\widehat{Q}_k\}_{k\in \mathbb{N}}\subset \mathbb{L}^{\!\!n\times n}$ as follows:
\begin{equation}\label{eq:seq.hatQ}
 \widehat{Q}_k= \bigwedge_{|u|\leqslant k} \tau_u\slash \tau_u ,
\end{equation}
for each $k\in \mathbb{N}$. Then for an arbitrary $k\in \mathbb{N}_0$ the following statements hold:
\begin{itemize}\parskip=-2pt
\item[{\rm (a)}] $\widehat{Q}_k$ is a fuzzy quasi-order matrix and $Q_k\leqslant\widehat{Q}_k$, where $Q_k$ is as in Theorem~\ref{th:ri.k-e};
\item[{\rm (b)}] $\mathpzc{A}_{\widehat{Q}_k}$ is $k$-equivalent to $\mathpzc{A}$;
\item[{\rm (c)}] if $\widehat{Q}_k=L\cdot R$ is an $r$-factorization of $\widehat{Q}_k$, for some $r\leqslant d(\widehat{Q}_k)$, then $\mathpzc{A}_{L|R}$ is $k$-equivalent to $\mathpzc{A}$.
\end{itemize}
\end{theorem}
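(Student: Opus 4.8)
The plan is to mirror the proof of Theorem \ref{th:ri.k-e}, exploiting that by the residuation property \eqref{eq:res.prop.vec} the matrix $\widehat{Q}_k=\bigwedge_{|u|\leqslant k}\tau_u\slash \tau_u$ is exactly the greatest matrix $U\in\mathbb{L}^{\!\!n\times n}$ satisfying $U\cdot \tau_u\leqslant \tau_u$ for all $u\in X^*$ with $|u|\leqslant k$ (a ``weakly right invariant up to length $k$'' matrix); note also that these matrices decrease in $k$ and have $\bigwedge_{u\in X^*}\tau_u\slash\tau_u$ from \eqref{eq:gwri} as their infimum.

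For (a): since $I_n\cdot \tau_u=\tau_u$, the identity $I_n$ solves the system $\{U\cdot\tau_u\leqslant\tau_u\mid |u|\leqslant k\}$, so $I_n\leqslant\widehat{Q}_k$; and since $\widehat{Q}_k^2\cdot\tau_u=\widehat{Q}_k\cdot\widehat{Q}_k\cdot\tau_u\leqslant\widehat{Q}_k\cdot\tau_u\leqslant\tau_u$ for every such $u$, the matrix $\widehat{Q}_k^2$ also solves the system, hence $\widehat{Q}_k^2\leqslant\widehat{Q}_k$, proving $\widehat{Q}_k$ is a fuzzy quasi-order matrix. For $Q_k\leqslant\widehat{Q}_k$ it suffices to check that $Q_k$ solves the same system, i.e. $Q_k\cdot\tau_u\leqslant\tau_u$ for every $u=x_1\cdots x_s$ with $s\leqslant k$. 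This is the telescoping estimate $Q_k\cdot\delta_{x_1}\cdots\delta_{x_s}\cdot\tau\leqslant Q_s\cdot\delta_{x_1}\cdots\delta_{x_s}\cdot\tau\leqslant\delta_{x_1}\cdot Q_{s-1}\cdot\delta_{x_2}\cdots\delta_{x_s}\cdot\tau\leqslant\cdots\leqslant\delta_{x_1}\cdots\delta_{x_s}\cdot Q_0\cdot\tau\leqslant\delta_{x_1}\cdots\delta_{x_s}\cdot\tau=\tau_u$, which uses only that $\{Q_t\}$ is decreasing (so $Q_k\leqslant Q_s$), the inequalities $Q_{t+1}\cdot\delta_x\leqslant\delta_x\cdot Q_t$ and $Q_0\cdot\tau\leqslant\tau$ from \eqref{eq:seq.Q}; the case $u=\varepsilon$ is immediate from $\tau_\varepsilon=\tau$.

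For (b): reflexivity of $\widehat{Q}_k$ upgrades the defining inequalities to equalities, $\widehat{Q}_k\cdot\tau_u=\tau_u$ whenever $|u|\leqslant k$. Given $u=x_1\cdots x_s$ with $s\leqslant k$, formula \eqref{eq:behAQ} gives $\beh{\mathpzc{A}_{\widehat{Q}_k}}(u)=\sigma\cdot\widehat{Q}_k\cdot\delta_{x_1}\cdot\widehat{Q}_k\cdot\delta_{x_2}\cdots\widehat{Q}_k\cdot\delta_{x_s}\cdot\widehat{Q}_k\cdot\tau$, with $s+1$ occurrences of $\widehat{Q}_k$, and I would delete them one at a time from the right: at each stage the block immediately following the outermost remaining $\widehat{Q}_k$ equals $\delta_{x_j}\cdots\delta_{x_s}\cdot\tau=\tau_{x_j\cdots x_s}$, the suffix $x_j\cdots x_s$ has length $s-j+1\leqslant s\leqslant k$, so $\widehat{Q}_k\cdot\tau_{x_j\cdots x_s}=\tau_{x_j\cdots x_s}$ and that copy of $\widehat{Q}_k$ disappears. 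After all deletions we reach $\sigma\cdot\delta_{x_1}\cdots\delta_{x_s}\cdot\tau=\beh{\mathpzc{A}}(u)$; the case $u=\varepsilon$ is $\sigma\cdot\widehat{Q}_k\cdot\tau=\sigma\cdot\tau$ via \eqref{eq:behAQe}. Hence $\mathpzc{A}_{\widehat{Q}_k}$ is $k$-equivalent to $\mathpzc{A}$. Part (c) is then immediate exactly as in Theorem \ref{th:ri.k-e}(c): for any factorization $\widehat{Q}_k=L\cdot R$, the definition of $\mathpzc{A}_{L|R}$ together with \eqref{eq:behAQ} and \eqref{eq:behAQe} yields $\beh{\mathpzc{A}_{L|R}}=\beh{\mathpzc{A}_{\widehat{Q}_k}}$.

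The only genuinely non-bookkeeping point is the right-to-left absorption in (b): one must notice that every suffix of a word of length $\leqslant k$ again has length $\leqslant k$, so the defining property $\widehat{Q}_k\cdot\tau_w=\tau_w$ is available at each of the $s+1$ steps. The inequality $Q_k\leqslant\widehat{Q}_k$ in (a) is a routine telescoping estimate of the same shape as one already carried out in Theorem \ref{th:ri.k-e}(b), and everything else is formal.
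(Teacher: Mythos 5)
Your proof is correct and follows essentially the same route as the paper's: the same telescoping estimate establishing $Q_k\cdot\tau_u\leqslant\tau_u$ (hence $Q_k\leqslant\widehat{Q}_k$), the same upgrade of $\widehat{Q}_k\cdot\tau_u\leqslant\tau_u$ to an equality via reflexivity, and the same right-to-left absorption of the copies of $\widehat{Q}_k$ using that every suffix of $u$ has length at most $k$. The only cosmetic difference is that you verify reflexivity and transitivity of $\widehat{Q}_k$ directly from its characterization as the greatest solution of the system $U\cdot\tau_u\leqslant\tau_u$, $|u|\leqslant k$, whereas the paper simply notes that an infimum of fuzzy quasi-order matrices is again a fuzzy quasi-order matrix.
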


\begin{proof}
    (a) We have that $\widehat{Q}_k$ is a fuzzy quasi-order matrix as the infimum of a family of fuzzy quasi-order matrices $\tau_u\slash \tau_u$, $u\in X^*$, $|u|\leqslant k$.

    Consider an arbitrary $u\in X^*$ such that $|u|\leqslant k$.~Then $u=x_1x_2\ldots x_s$, where $x_1,x_2,\ldots ,x_s\in X$ and $s\leqslant k$, and we get
    \begin{align*}
        Q_k\cdot \tau_u&=Q_k\cdot \delta_{x_1}\cdot \delta_{x_2}\cdot \ldots \cdot \delta_{x_s}\cdot \tau \leqslant 
        Q_s\cdot \delta_{x_1}\cdot \delta_{x_2}\cdot \ldots \cdot \delta_{x_s}\cdot \tau \leqslant 
        \delta_{x_1}\cdot Q_{s-1}\cdot \delta_{x_2}\cdot \ldots \cdot \delta_{x_s}\cdot \tau \\
        &\leqslant \delta_{x_1}\cdot \delta_{x_2}\cdot Q_{s-2}\cdot \ldots \cdot \delta_{x_s}\cdot \tau \leqslant \ldots \leqslant 
        \delta_{x_1}\cdot \delta_{x_2}\cdot \ldots \cdot Q_{1}\cdot \delta_{x_s}\cdot \tau \leqslant 
        \delta_{x_1}\cdot \delta_{x_2}\cdot \ldots \cdot \delta_{x_s}\cdot Q_{0}\cdot \tau \\
        &\leqslant \delta_{x_1}\cdot \delta_{x_2}\cdot \ldots \cdot \delta_{x_s}\cdot \tau = \tau_u .
    \end{align*}
    Therefore, $Q_k\cdot \tau_u\leqslant \tau_u$. i.e., $Q_k\leqslant \tau_u\slash \tau_u$, for every $u\in X^*$ such that $|u|\leqslant k$, so
    \[
    Q_k\leqslant \bigwedge_{|u|\leqslant k} \tau_u\slash \tau_u = \widehat{Q}_k .
    \]

(b) According to the definition of $\widehat{Q}_k$ we obtain that $\widehat{Q}_k\cdot \tau_u\leqslant \tau_u$, for every $u\in X^*$, $|u|\leqslant k$, and since $\tau_u=I_n\cdot \tau_u \leqslant \widehat{Q}_k\cdot \tau_u$, we conclude that 
\begin{equation}\label{eq:eq.tau}
    \widehat{Q}_k\cdot \tau_u=\tau_u,\quad\text{for every $u\in X^*$, $|u|\leqslant k$}.
\end{equation}

Consider again an arbitrary $u\in X^*$ such that $|u|\leqslant k$. According to \eqref{eq:eq.tau} we get
\begin{align*}
    \beh{\mathpzc{A}_{\widehat{Q}_k}}(u)&= \sigma\cdot\widehat{Q}_k\cdot \delta_{x_1}\cdot\widehat{Q}_k\cdot \delta_{x_2}\cdot\widehat{Q}_k\cdot \ldots \cdot\widehat{Q}_k\cdot \delta_{x_s}\cdot\widehat{Q}_k\cdot \tau = 
    \sigma\cdot\widehat{Q}_k\cdot \delta_{x_1}\cdot\widehat{Q}_k\cdot \delta_{x_2}\cdot\widehat{Q}_k\cdot \ldots \cdot\widehat{Q}_k\cdot \delta_{x_s}\cdot\tau \\
    &= \sigma\cdot\widehat{Q}_k\cdot \delta_{x_1}\cdot\widehat{Q}_k\cdot \delta_{x_2}\cdot\widehat{Q}_k\cdot \ldots \cdot\widehat{Q}_k\cdot \tau_{x_s} = 
    \sigma\cdot\widehat{Q}_k\cdot \delta_{x_1}\cdot\widehat{Q}_k\cdot \delta_{x_2}\cdot\widehat{Q}_k\cdot \ldots \cdot \tau_{x_s}= \ldots \\
    &= \sigma\cdot\widehat{Q}_k\cdot \delta_{x_1}\cdot\widehat{Q}_k\cdot \delta_{x_2}\cdot\widehat{Q}_k\cdot \tau_{x_3\ldots x_s} =
    \sigma\cdot\widehat{Q}_k\cdot \delta_{x_1}\cdot\widehat{Q}_k\cdot \delta_{x_2}\cdot \tau_{x_3\ldots x_s} = \\
    &=\sigma\cdot\widehat{Q}_k\cdot \delta_{x_1}\cdot\widehat{Q}_k \cdot \tau_{x_2x_3\ldots x_s} = \sigma\cdot\widehat{Q}_k\cdot \delta_{x_1}\cdot \tau_{x_2x_3\ldots x_s} = \sigma\cdot\widehat{Q}_k\cdot  \tau_{x_1x_2x_3\ldots x_s} \\
    &= \sigma\cdot  \tau_{x_1x_2x_3\ldots x_s} = \sigma\cdot \tau_u=\beh{\mathpzc{A}}(u).
\end{align*}
Therefore, we have proved that $\mathpzc{A}_{\widehat{Q}_k}$ is $k$-equivalent to $\mathpzc{A}$.

(c) This is proved in the same way as the statement (c) in Theorem \ref{th:ri.k-e}.
\end{proof}

In a similar way we can prove the theorem that provides the fourth way to perform the $k$-reduction.

\begin{theorem}\label{th:wli.k-e}
Let $\mathpzc{A}=(A,\sigma,\delta,\tau)$ be a fuzzy finite automaton over $\mathbb{L}$ and $X$ with $n$ states. 

Let us define a sequence of matrices $\{\widehat{P}_k\}_{k\in \mathbb{N}}\subset \mathbb{L}^{\!\!n\times n}$ as follows:
\begin{equation}\label{eq:seq.hatP}
 \widehat{P}_k= \bigwedge_{|u|\leqslant k} \sigma_u\backslash \sigma_u ,
\end{equation}
for every $k\in \mathbb{N}$.

\pagebreak

\noindent
Then for an arbitrary $k\in \mathbb{N}$ the following statements hold:
\begin{itemize}\parskip=-1pt
\item[{\rm (a)}] $\widehat{P}_k$ is a fuzzy quasi-order matrix and $P_k\leqslant\widehat{P}_k$, where $P_k$ is as in Theorem~\ref{th:li.k-e};
\item[{\rm (b)}] $\mathpzc{A}_{\widehat{P}_k}$ is $k$-equivalent to $\mathpzc{A}$;
\item[{\rm (c)}] if $\widehat{P}_k=L\cdot R$ is an $r$-factorization of $\widehat{P}_k$, for some $r\leqslant d(\widehat{P}_k)$, then $\mathpzc{A}_{L|R}$ is $k$-equivalent to $\mathpzc{A}$.
\end{itemize}
\end{theorem}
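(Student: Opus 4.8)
The proof mirrors that of Theorem~\ref{th:wri.k-e}, with every product and residual reflected to the opposite side, so the plan is to run the same three steps on the ``left'' data $\sigma_u=\sigma\cdot\delta_u$ in place of the ``right'' data $\tau_u=\delta_u\cdot\tau$.

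For part (a), $\widehat{P}_k$ is a fuzzy quasi-order matrix because it is the infimum of the family $\{\sigma_u\backslash\sigma_u : |u|\leqslant k\}$ and each $\sigma_u\backslash\sigma_u$ is a fuzzy quasi-order matrix (reflexivity from $a\to a=1$, transitivity from $(a\to b)\otimes(b\to c)\leqslant a\to c$), while an infimum of fuzzy quasi-order matrices is again one. To get $P_k\leqslant\widehat{P}_k$ I would show $\sigma_u\cdot P_k\leqslant\sigma_u$ for every $u$ with $|u|\leqslant k$ and then invoke the residuation property \eqref{eq:res.prop.vec}. Writing $u=x_1\cdots x_s$ with $s\leqslant k$, I would use that $\{P_t\}$ is descending (so $P_k\leqslant P_s$), that $P_{t+1}\leqslant\delta_x\backslash(P_t\cdot\delta_x)$ yields $\delta_x\cdot P_{t+1}\leqslant P_t\cdot\delta_x$, and that $\sigma\cdot P_0\leqslant\sigma$ (since $P_0=\sigma\backslash\sigma$), to telescope
\[
\sigma_u\cdot P_k\leqslant\sigma\cdot\delta_{x_1}\cdots\delta_{x_s}\cdot P_s\leqslant\sigma\cdot\delta_{x_1}\cdots\delta_{x_{s-1}}\cdot P_{s-1}\cdot\delta_{x_s}\leqslant\cdots\leqslant\sigma\cdot P_0\cdot\delta_{x_1}\cdots\delta_{x_s}\leqslant\sigma\cdot\delta_{x_1}\cdots\delta_{x_s}=\sigma_u,
\]
which is the mirror of the computation in Theorem~\ref{th:wri.k-e}(a).

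For part (b), the defining formula \eqref{eq:seq.hatP} gives $\sigma_u\cdot\widehat{P}_k\leqslant\sigma_u$ for all $|u|\leqslant k$, and since $I_n\leqslant\widehat{P}_k$ forces the reverse inequality, $\sigma_u\cdot\widehat{P}_k=\sigma_u$ for all $|u|\leqslant k$ --- the left-sided analogue of \eqref{eq:eq.tau}. Then, using \eqref{eq:behAQ} and \eqref{eq:behAQe} for $\mathpzc{A}_{\widehat{P}_k}$ and peeling off the copies of $\widehat{P}_k$ from the left (first $\sigma\cdot\widehat{P}_k=\sigma_\varepsilon=\sigma$, then $\sigma_{x_1}\cdot\widehat{P}_k=\sigma_{x_1}$, and so on, each step legitimate because the prefix $x_1\cdots x_j$ has length $j\leqslant s\leqslant k$), one reduces $\beh{\mathpzc{A}_{\widehat{P}_k}}(u)$ to $\sigma\cdot\delta_{x_1}\cdots\delta_{x_s}\cdot\tau=\beh{\mathpzc{A}}(u)$, and similarly $\beh{\mathpzc{A}_{\widehat{P}_k}}(\varepsilon)=\sigma\cdot\widehat{P}_k\cdot\tau=\sigma\cdot\tau=\beh{\mathpzc{A}}(\varepsilon)$. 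Part (c) follows verbatim as in Theorem~\ref{th:ri.k-e}(c), using $\beh{\mathpzc{A}_{\widehat{P}_k}}=\beh{\mathpzc{A}_{L|R}}$.

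I do not expect a genuinely new difficulty: the substance of the argument is the two telescoping chains, which are the left-sided reflections of those already established for $\widehat{Q}_k$. The one thing requiring care is the bookkeeping --- keeping $\sigma$ on the left and $\tau$ on the right, reading $\sigma_u\backslash\sigma_u$ through $M\leqslant\alpha\backslash\beta\Leftrightarrow\alpha\cdot M\leqslant\beta$ rather than through $M\cdot\alpha\leqslant\beta$, and verifying that the length bound $|u|\leqslant k$ is met at every intermediate prefix when the factors of $\widehat{P}_k$ are peeled off in part (b).
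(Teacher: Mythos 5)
Your proposal is correct and is exactly the argument the paper intends: the paper omits the proof with the remark that it is "similar" to that of Theorem~\ref{th:wri.k-e}, and your left--right dualization (telescoping $\delta_x\cdot P_{t+1}\leqslant P_t\cdot\delta_x$ down to $\sigma\cdot P_0\leqslant\sigma$ for part (a), and peeling copies of $\widehat{P}_k$ off from the left via $\sigma_u\cdot\widehat{P}_k=\sigma_u$ for part (b)) is precisely that similar proof. No gaps.
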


The previous four theorems provide four different methods for $k$-reduction of fuzzy finite autom\-ata. The question naturally arises: Do we really need all four methods?~In other words, are any~of~these methods better than others, so that others are not necessary?~In the sequel, we will show that each of these methods has some advantages, but also some disadvantages in relation to the others, as~well~as in relation to methods for full state reduction (reduction that ensures full equivalence) provided~in~\cite{SCI.14}.

First we note that the sequence of matrices we use in computing the greatest right invariant (and also left invariant) fuzzy quasi-order matrix may be infinite, and in such cases, the efficiency of reduc\-tions using such matrices becomes questionable.~In contrast, $k$-reductions by means of the $k$th members of sequences defined by \eqref{eq:seq.Q} and \eqref{eq:seq.P} can always be realized in a finite number of steps.~In addition, members of those arrays with smaller indices produce automata with fewer states than members with larger indices (see example below), which means that $k$-reductions generally yield fewer fuzzy finite automata than reductions that result in strictly equivalent fuzzy finite automata.

Regarding $k$-reductions by means of the $k$th members of sequences defined by \eqref{eq:seq.Q} and \eqref{eq:seq.hatQ}, by $Q_k\leqslant\widehat{Q}_k$ it follows that $\widehat{Q}_k$ generally yields better reduction than $Q_k$ (see Example \ref{ex:ex1}), for each~$k\in \mathbb{N}_0$. However, computing the matrix $\widehat{Q}_k$ can be significantly more difficult than computing the matrix $Q_k$, because it requires computing the vectors $\tau_u$, for $u\in X^*$, $|u|\leqslant k$, and their the number can be more than $m^k$, where $m$ is the number of input letters. 

Finally, as far as $k$-reductions by means of the $k$th members of sequences defined by \eqref{eq:seq.Q} and \eqref{eq:seq.P} are concerned, they are not comparable.~In some cases one of them will give better results, and in other cases the another one.~In Example \ref{ex:ex1} we have a case where $P_k$, for each $k\geqslant 1$, does not perform reduction, while $Q_k$ does, but the opposite would happen on the reverse automaton of the automaton considered in that example.~The same can be said for $k$-reductions by means of the $k$th~members of sequences~defined by \eqref{eq:seq.hatQ} and \eqref{eq:seq.hatP}.

In the following example, we assumed the structure of membership values to be the two-element Boolean algebra $\mathbb{B}$.~The reason why we decided so is that $\mathbb{B}$ is a subalgebra of every complete residua\-ted lattice, and every automaton over the two-element Boolean algebra $\mathbb{B}$ can also be considered an automaton over an arbitrary complete residuated lattice.

\smallskip

\begin{example}\label{ex:ex1}\rm
Let $\mathpzc{A}=(A,\sigma,\delta,\tau)$ be a fuzzy finite automaton over the two-element Boolean algebra $\mathbb{B}=\{0,1\}$ and an inpit alphabet $X=\{x,y\}$ given by
{\footnotesize\[
\sigma =\begin{bmatrix} 1 & 1 & 0 & 0 & 0 & 1 \end{bmatrix},\quad
\delta_x =\begin{bmatrix} 
1 & 0 & 0 & 0 & 0 & 0 \\
0 & 0 & 1 & 0 & 1 & 0 \\
1 & 0 & 1 & 1 & 0 & 0 \\
0 & 1 & 1 & 0 & 1 & 0 \\
0 & 0 & 0 & 1 & 0 & 0 \\
0 & 0 & 0 & 0 & 0 & 1 
\end{bmatrix},\quad
\delta_y =\begin{bmatrix} 
1 & 0 & 0 & 0 & 0 & 0 \\
1 & 1 & 0 & 0 & 0 & 0 \\
1 & 0 & 0 & 0 & 1 & 1 \\
0 & 0 & 0 & 1 & 0 & 1 \\
0 & 0 & 0 & 1 & 0 & 0 \\
1&  0 & 1 & 1 & 0 & 0 
\end{bmatrix},\quad
\tau =\begin{bmatrix} 1 \\ 1 \\ 0 \\ 1 \\ 0 \\ 1 \end{bmatrix}.
\]}%
Applying formula \eqref{eq:seq.Q} we get
{\footnotesize\begin{align*}
Q_0&=\begin{bmatrix} 
1 & 1 & 1 & 1 & 1 & 1  \\
1 & 1 & 1 & 1 & 1 & 1  \\
0 & 0 & 1 & 0 & 1 & 0  \\
1 & 1 & 1 & 1 & 1 & 1 \\
0 & 0 & 1 & 0 & 1 & 0  \\
1 & 1 & 1 & 1 & 1 & 1 
\end{bmatrix}, \quad 
Q_1=\begin{bmatrix} 
1 & 1 & 1 & 1 & 1 & 1  \\
0 & 1 & 0 & 0 & 0 & 0  \\
0 & 0 & 1 & 0 & 1 & 0  \\
1 & 1 & 1 & 1 & 1 & 1  \\
0 & 0 & 1 & 0 & 1 & 0  \\
1 & 1 & 1 & 1 & 1 & 1 
\end{bmatrix}, \quad 
Q_2=\begin{bmatrix} 
1 & 1 & 1 & 1 & 1 & 1  \\
0 & 1 & 0 & 0 & 0 & 0  \\
0 & 0 & 1 & 0 & 1 & 0  \\
0 & 1 & 0 & 1 & 0 & 0  \\
0 & 0 & 1 & 0 & 1 & 0  \\
1 & 1 & 1 & 1 & 1 & 1 
\end{bmatrix}, \quad \\ 
\end{align*}}%

{\footnotesize\begin{align*}
Q_3&=Q_4=\begin{bmatrix} 
1 & 1 & 1 & 1 & 1 & 1  \\
0 & 1 & 0 & 0 & 0 & 0  \\
0 & 0 & 1 & 0 & 1 & 0  \\
0 & 1 & 0 & 1 & 0 & 0  \\
0 & 0 & 0 & 0 & 1 & 0  \\
1 & 1 & 1 & 1 & 1 & 1
\end{bmatrix}, \quad 
\end{align*}}%
with $d(Q_0)=2$, $d(Q_1)=3$, $d(Q_2)=4$ and $d(Q_3)=5$, where ${Q}_3=Q_4$ is the greatest right invariant fuzzy quasi-order matrix, and applying \eqref{eq:seq.hatQ} we get
{\footnotesize\[
\widehat{Q}_0=\begin{bmatrix} 
1 & 1 & 1 & 1 & 1 & 1 \\
1 & 1 & 1 & 1 & 1 & 1 \\
0 & 0 & 1 & 0 & 1 & 0 \\
1 & 1 & 1 & 1 & 1 & 1 \\		
0 & 0 & 1 & 0 & 1 & 0 \\
1 & 1 & 1 & 1 & 1 & 1
\end{bmatrix}, \quad 
\widehat{Q}_1=\begin{bmatrix} 
1 & 1 & 1 & 1 & 1 & 1 \\ 
0 & 1 & 0 & 0 & 0 & 0 \\
0 & 0 & 1 & 0 & 1 & 0 \\
1 & 1 & 1 & 1 & 1 & 1 \\	
0 & 0 & 1 & 0 & 1 & 0 \\
1 & 1 & 1 & 1 & 1 & 1
\end{bmatrix}, 
\]}%
with $d(\widehat{Q}_0)=2$ and $d(\widehat{Q}_1)=3$, where $\widehat{Q}_1$ is the greatest weakly right invariant fuzzy quasi-order matrix.

We have that $\mathpzc{A}_{Q_k}$ is $k$-equivalent to $\mathpzc{A}$, for each $k\in \{0,1,2\}$, while $\mathpzc{A}_{Q_3}$ is strictly equivalent to $\mathpzc{A}$, and we also have that $|\mathpzc{A}_{Q_0}|<|\mathpzc{A}_{Q_1}|<|\mathpzc{A}_{Q_2}|<|\mathpzc{A}_{Q_3}|$.~On the other hand, $\mathpzc{A}_{\widehat{Q}_0}$ is $0$-equivalent to $\mathpzc{A}$, while  $\mathpzc{A}_{\widehat{Q}_1}$ is strictly equivalent to $\mathpzc{A}$.~We see that ${\widehat{Q}_1}$ provides better reduction than ${Q_3}$.

The sequence computed according to formula \eqref{eq:seq.P} is the following:
{\footnotesize\[
P_0=\begin{bmatrix}
1 & 1 & 0 & 0 & 0 & 1 \\
1 & 1 & 0 & 0 & 0 & 1 \\
1 & 1 & 1 & 1 & 1 & 1 \\
1 & 1 & 1 & 1 & 1 & 1 \\
1 & 1 & 1 & 1 & 1 & 1 \\
1 & 1 & 0 & 0 & 0 & 1 
\end{bmatrix}, \quad  
P_1=\begin{bmatrix}
1 & 0 & 0 & 0 & 0 & 0 \\
1 & 1 & 0 & 0 & 0 & 0 \\
1 & 0 & 1 & 0 & 0 & 0 \\
1 & 1 & 1 & 1 & 0 & 0 \\ 
1 & 0 & 1 & 0 & 1 & 1 \\
1 & 0 & 0 & 0 & 0 & 1 
\end{bmatrix}, \quad 
P_2=P_3=\begin{bmatrix}
1 & 0 & 0 & 0 & 0 & 0 \\
1 & 1 & 0 & 0 & 0 & 0 \\
1 & 0 & 1 & 0 & 0 & 0 \\
1 & 0 & 0 & 1 & 0 & 0 \\			
1 & 0 & 0 & 0 & 1 & 0 \\
1 & 0 & 0 & 0 & 0 & 1
\end{bmatrix}, \quad  
\]}%
with $d(P_0)=2$, $d(P_1)=6$ and $d(P_2)=6$, $P_2=P_3$ is the greatest left invariant fuzzy quasi-order matrix, while the sequence computed using \eqref{eq:seq.hatP} is the following:
{\footnotesize\[
\widehat{P}_0=\begin{bmatrix} 
1 & 1 & 0 & 0 & 0 & 1 \\
1 & 1 & 0 & 0 & 0 & 1 \\
1 & 1 & 1 & 1 & 1 & 1 \\
1 & 1 & 1 & 1 & 1 & 1 \\
1 & 1 & 1 & 1 & 1 & 1 \\
1 & 1 & 0 & 0 & 0 & 1
\end{bmatrix}, \quad 
\widehat{P}_1=\begin{bmatrix} 
1 & 0 & 0 & 0 & 0 & 0 \\
1 & 1 & 0 & 0 & 0 & 0 \\
1 & 0 & 1 & 0 & 0 & 0 \\
1 & 1 & 1 & 1 & 0 & 0 \\
1 & 0 & 1 & 0 & 1 & 1 \\
1 & 0 & 0 & 0 & 0 & 1 
\end{bmatrix}, \quad 
\widehat{P}_2=\widehat{P}_3=\begin{bmatrix} 
1 & 0 & 0 & 0 & 0 & 0 \\
1 & 1 & 0 & 0 & 0 & 0 \\
1 & 0 & 1 & 0 & 0 & 0 \\
1 & 0 & 0 & 1 & 0 & 0 \\
1 & 0 & 0 & 0 & 1 & 0 \\
1 & 0 & 0 & 0 & 0 & 1 
\end{bmatrix}, 
\]}%
with $d(\widehat{P}_0)=2$, $d(\widehat{P}_1)=6$ and $d(\widehat{P}_2)=6$, while $\widehat{P}_2$ is the greatest weakly left invariant fuzzy quasi-order matrix.~Although $\mathpzc{A}_{P_1}$ and $\mathpzc{A}_{\widehat{P}_1}$ are $1$-equivalent to $\mathpzc{A}$, and $\mathpzc{A}_{P_2}$ and $\mathpzc{A}_{\widehat{P}_2}$ are strictly equivalent to $\mathpzc{A}$, this is of no significance as there is no any reduction. 
\end{example}

\section{Complexity issues}

Let $n$ denote the number of states of a fuzzy finite automaton $\mathpzc{A}=(A,\sigma,\delta,\tau)$ and $m$ the number of letters in the input alphabet $X$, and let $c_{\lor}$, $c_{\land}$, $c_{\otimes}$ and $c_{\to}$ be respectively computational costs of the operations $\lor $, $\land$, $\otimes $ and $\to$ in the underlying complete residuated lattice $\mathbb{L}$.~If $\mathbb{L}$ is linearly ordered, we can assume that $c_{\lor}=c_{\land}=1$, and if $\mathbb{L}$ is the G\"odel structure, we can also assume that $c_{\otimes}=c_{\to}=1$.

First we consider the computational time of the procedure from Theorem \ref{th:ri.k-e} (or Theorem \ref{th:li.k-e}),~for a given $k\in \mathbb{N}$. It is clear that the time required to compute $Q_0$ is $O(n^2c_{\to})$.~When we have computed~$Q_s$, for some $s\in \mathbb{N}$, $s<k$, and we are computing $Q_{s+1}$ from $Q_s$, we have the following: 
\begin{itemize}\parskip=-2pt
\item[1)] For a fixed $x\in X$, the time required to compute the product $\delta_x\cdot Q_s$ is $O(n^3(c_{\otimes }+c_{\lor}))$, and when this product is computed, we need an additional time $O(n^3(c_{\to }+c_{\land}))$ to compute the residual $(\delta_x\cdot Q_s)/\delta_x$. Thus, the total time required to compute $(\delta_x\cdot Q_s)/\delta_x$ is $O(n^3(c_{\to }+c_{\land}+c_{\otimes }+c_{\lor}))$.
\item[2)] Now, to compute $(\delta_x\cdot Q_s)/\delta_x$ for all $x\in X$ we need time $O(mn^3(c_{\to }+c_{\land}+c_{\otimes }+c_{\lor}))$.
\item[3)] Next, when all matrices from 2) have been computed, to compute all infima in $Q_s\land \bigwedge_{x\in X}(\delta_x\cdot Q_s)/\delta_x$ and obtain $Q_{s+1}$ we need time $O(n^2mc_{\land})$.
\item[4)] Finally, the total time required to compute $Q_{s+1}$ from $Q_s$ is $O(mn^3(c_{\to }+c_{\land}+c_{\otimes }+c_{\lor}))$.
\end{itemize}
Therefore, the time required to compute all matrices $Q_0$, $Q_1$, \dots, $Q_k$, i.e., the total computational time of the procedure from Theorem \ref{th:ri.k-e}, amounts $O(kmn^3(c_{\to }+c_{\land}+c_{\otimes }+c_{\lor}))$.~This is also the total computational time of the procedure from Theorem \ref{th:li.k-e}.~That computational time can be even better in cases where $Q_s=Q_{s+1}$, for some $s<k$, because then we do not have to compute all the matrices between $Q_{s+1}$ and $Q_k$ which are all equal to each other.~However, to achieve that better computational time, after computing the matrix $Q_{s+1}$ we need to check whether $Q_{s+1}=Q_{s}$.~The time required for such a check is $O(n^2)$, and for all such checks it is at most $O(kn^2)$, which obviously does not affect the total computational time order of our procedure.

Next we consider the computational time of the procedure from Theorem \ref{th:wri.k-e} (or Theorem \ref{th:wli.k-e}),~for a given $k\in \mathbb{N}$.~To compute $\widehat{Q}_k$ we have to compute the family of vectors $\{\tau_u\}_{|u|\leqslant k}$.~That family forms a perfect $m$-ary tree with the root corresponding to the vector $\tau$, and computing the members of the family is reduced to filling that tree level by level, starting from the root.~Consequently, the number of members of that family is at most $O(m^k)$.~When a vector $\tau_u$, for some $u\in X^*$, $|u|<k$, is computed, then for each $x\in X$ we compute $\tau_{xu}$ according to the formula $\tau_{xu}=\delta_x\cdot \tau_u$.~Therefore, the time needed to compute this product, i.e., the time nedded to compute each member of the considered family, is $O(n^2(c_{\otimes}+c_{\lor}))$.~Moreover, when $\tau_u$, for some $u\in X^+$, $|u|\leqslant k$, is computed, the residual $\tau_u/\tau_u$ can be computed in time $O(n^2c_{\to})$, so the time required to compute a single $\tau_u$ and the residual $\tau_u/\tau_u$ is $O(n^2(c_{\otimes}+c_{\lor}+c_{\to }))$, and the total time required to compute the whole family $\{\tau_u\}_{|u|\leqslant k}$ and all residuals $\tau_u/\tau_u$, for $u\in X^*$, $|u|\leqslant k$, amounts $O(m^kn^2(c_{\otimes}+c_{\lor}+c_{\to }))$.~Finally, to compute $\widehat{Q}_k$ we have to apply the operation $\land$ between matrices $\tau_u/\tau_u$, $u\in X^*$, $|u|\leqslant k$, at most $O(m^k)$ times, and since the computational time for a single application of the operation $\land $ is $O(n^2c_{\land})$, the total computational~time for all such operations is $O(m^kn^2c_{\land})$.~Hence, the total time required to compute $\widehat{Q}_k$, i.e., the total computational~time of the procedure from Theorem \ref{th:wri.k-e} is $O(m^kn^2(c_{\otimes}+c_{\lor}+c_{\to }+c_{\land}))$.~~This is also the total computational time of the procedure from Theorem \ref{th:wli.k-e}.

Applying the technique of~\cite{NMS.23b}, the procedures from Theorems \ref{th:ri.k-e}--\ref{th:wli.k-e} can be improved so that the factor $n^2$ in their complexity estimate is reduced to the sum of $n$ and the number of non-zero fuzzy transitions of the input fuzzy automaton~$\mathpzc{A}$.

\bibliographystyle{eptcs}
\bibliography{approxbib}

\end{document}